\documentclass[conference]{IEEEtran}
\IEEEoverridecommandlockouts
\usepackage{amsmath,amssymb,amsfonts,amsthm}
\usepackage{algorithmic}
\usepackage{cite}
\usepackage[T1]{fontenc}
\usepackage{float}
\usepackage{graphicx}
\usepackage{svg}
\usepackage{textcomp}
\usepackage{xcolor}
\usepackage{enumitem}
\usepackage{xspace}
\usepackage{mathtools}


\newtheorem{thm}{Theorem}

\newtheorem{remark}{Remark}


\renewcommand{\bar}{\overline}
\newcommand{\bb}[1]{\mathbb{#1}}
\newcommand{\ee}{\varepsilon}
\newcommand{\narrive}{N_{\text{arrivals}}}
\newcommand{\ndepart}{N_{\text{departures}}}

\newcommand{\set}[1]{\left\{ #1 \right\}}
\newcommand{\etal}{\textit{et al}.\xspace}
\newcommand{\ie}{\textit{i}.\textit{e}.,\xspace}
\newcommand{\eg}{\textit{e}.\textit{g}.,\xspace}


\DeclareMathOperator{\jsq}{JSQ}
\DeclareMathOperator{\slq}{SLQ}

\def\BibTeX{{\rm B\kern-.05em{\sc i\kern-.025em b}\kern-.08em
    T\kern-.1667em\lower.7ex\hbox{E}\kern-.125emX}}

\begin{document}

\title{Service-the-Longest-Queue Among $d$ Choices Policy for Quantum Entanglement Switching
\thanks{This work is supported by QuTech NWO funding 2020–2024 Part I ‘Fundamental Research’, Project Number
601.QT.001-1, financed by the Dutch Research Council (NWO). We further acknowledge support from NWO
QSC grant BGR2 17.269. 
 TV was supported by the EPSRC funded INFORMED-AI
project  EP/Y028732/1.}
}

\author{\IEEEauthorblockN{Guo Xian Yau\IEEEauthorrefmark{1}, Thirupathaiah Vasantam\IEEEauthorrefmark{2}, and Gayane Vardoyan\IEEEauthorrefmark{3}}
\IEEEauthorblockA{\IEEEauthorrefmark{1}\textit{Faculty of Electrical Engineering, Mathematics and Computer Science and QuTech, TU Delft, The Netherlands}\\ 
\textit{\IEEEauthorrefmark{2}Department of Computer Science, Durham University, UK}\\
\textit{\IEEEauthorrefmark{3}Manning College of Information and Computer Sciences, University of Massachusetts, Amherst, USA}}}

\maketitle

\begin{abstract}
An Entanglement Generation Switch (EGS) is a quantum network hub that provides entangled states to a set of connected nodes by enabling them to share a limited number of hub resources. As entanglement requests arrive, they join dedicated queues corresponding to the nodes from which they originate. We propose a load-balancing policy wherein the EGS queries nodes for entanglement requests by randomly sampling $d$ of all available request queues and choosing the longest of these to service. This policy is an instance of the well-known power-of-$d$-choices paradigm previously introduced for classical systems such as data-centers. In contrast to previous models, however, we place queues at nodes instead of directly at the EGS, which offers some practical advantages. Additionally, we incorporate a tunable back-off mechanism into our load-balancing scheme to reduce the classical communication load in the network.
To study the policy, we consider a homogeneous star network topology that has the EGS at its center, and model it as a queueing system with requests that arrive according to a Poisson process and whose service times are exponentially distributed.
We provide an asymptotic analysis of the system by deriving a set of differential equations that describe the dynamics of the mean-field limit and provide expressions for the corresponding unique equilibrium state. 
Consistent with analogous results from randomized load-balancing for classical systems, we observe a significant decrease in the average request processing time when the number of choices $d$ increases from one to two during the sampling process, with diminishing returns for a higher number of choices. We also observe that our mean-field model provides a good approximation to study even moderately-sized systems. 
\end{abstract}

\begin{IEEEkeywords}
entanglement generation switch,
mean-field analysis,
load-balancing
\end{IEEEkeywords}

\section{Introduction} \label{sec:intro}
Quantum networks connect quantum-equipped devices to enable distributed applications that are not attainable via classical means alone. Examples include quantum computing in the cloud \cite{Broadbent2009:universal-blind-quantum-computation, Leichtle2021:verifying-bqp-computations-noisy-devices, Jiang2007:distributed-quantum-computing-based-on-small-quantum-registers}; quantum-enhanced sensing \cite{Giovannetti2004:quantum-enhanced-measurements, gottesman2012longer}; and quantum key distribution and conference key agreement \cite{BB84:quantum-cryptography-public-key-distribution-and-coin-tossing, bennett1992quantum, Hahn2020:Anonymous-CKA-in-QN}. While some of these applications are inherently entanglement-based, others can also consume entangled states for tasks such as remote state preparation \cite{Bennett2005:remote-preparation-of-quantum-states} and quantum state and gate teleportation \cite{Benett1993:teleporting-unknown-quantum-state-via-dual-classical-and-EPR-channels, gottesman1999demonstrating}. Entanglement is thus an essential resource whose generation and distribution constitute much of the efforts undertaken within a quantum network.

In fiber optic-based quantum networks, photonic losses increase exponentially with distance \cite{munro2015inside,azuma2023quantum}, rendering communication between quantum nodes infeasible without error correcting codes \cite{Knill-Laflamme-1997:Theory-of-QEC, varnava2006loss} and assistive devices such as quantum repeaters or switches \cite{briegel1998quantum,Dür1999:purification-based-quantum-repeaters, Muralidharan2018:one-way-quantum-repeaters-with-quantum-RS-codes, Vardoyan2021:on-the-sthochastic-analysis-of-an-eds, Gauthier2023:control-architecture-for-EGS-in-QN}. In this work, we study one such device -- an Entanglement Generation Switch (EGS) \cite{Gauthier2023:control-architecture-for-EGS-in-QN, Gauthier2024:on-demand-resource-allocation-algo-for-qn-and-its-performance-analysis} -- that can accommodate a number of connected nodes with bipartite entanglement generation. The EGS assists with this process by facilitating nodes' access to a limited number of shared hardware components (\eg Bell state analyzers (BSAs), as depicted in \cite[Figure 2]{Gauthier2024:on-demand-resource-allocation-algo-for-qn-and-its-performance-analysis}, capable of performing optical Bell state measurements on incoming photons), which we refer to as switch ``resources''. Figure~\ref{fig:EGS-architechture} illustrates an EGS serving $N$ nodes with $m$ resources (the architecture is discussed in detail in Section~\ref{subsec:background-egs}). The EGS resources in principle do not necessitate sophisticated technology such as quantum memories, easing requirements both on cost and fabrication in comparison to memory-equipped counterparts. These properties make the EGS highly relevant to near-term metropolitan-area quantum networks, thus motivating the architecture choice for this study.

Thus far, the EGS has been analyzed in settings where connected nodes are responsible for requesting service.
Here, we propose a novel service mode wherein the onus of request solicitation falls on the switch: the EGS thus queries a fixed number of nodes and assigns a resource module to the node with the largest outstanding number of entanglement requests. 
This operation mode of the EGS is an instance of the so-called power-of-$d$-random-choices paradigm which has seen a wide variety of applications ranging from hashing to virtual circuit routing \cite{Mitzenmacher2001:power-of-two-choices-techniques-and-results}. For brevity, we refer to the act of querying $d$ system components (\eg queues) and selecting one for service as a "$d$-choices policy".  Such policies have proven advantageous as load-balancing techniques in settings like data-centers or computer clusters. Here, an arriving request would ideally be assigned to the least loaded server/compute node, but access to full and up-to-date information about workloads might be unavailable or costly to obtain. Assigning the new task to the least loaded of $d\geq 2$ randomly chosen servers achieves a lower communication cost (relative to querying all servers) while considerably reducing the maximum server load (and therefore the average request response time) even with $d=2$.

A factor contributing to incomplete information at the EGS during decision-making is  node status, namely their readiness to commence entanglement generation. 
Each entanglement generation attempt requires nodes' communication qubits to be available for the coordinated emission of photons whose synchronized arrival is expected at an EGS resource. 
If queues are maintained by the EGS, then a possibility arises that a chosen request cannot be immediately serviced due to the respective nodes' unpreparedness to engage in entanglement generation: \eg a node may already be using its communication qubits to generate entanglement with other nodes, or these qubits may be involved in processing tasks that require their participation (\eg two-qubit gates for hardware platforms such as color centers in diamond \cite{childress2006coherent}). To learn this information, the EGS must communicate with the node-pair that issued the request, and multiple communication messages/rounds may be necessary until the EGS finds a serviceable request. We thus have a practical reason to situate request queues at nodes: when the EGS queries $d$ node-pairs for queue sizes, it receives most up-to-date information, including knowledge of which pairs are primed for entanglement generation. This strategy has the potential to amortize some of the communication delays associated with node status querying.

\begin{figure}[t]
    \centering
    \includegraphics[width=0.3\textwidth]{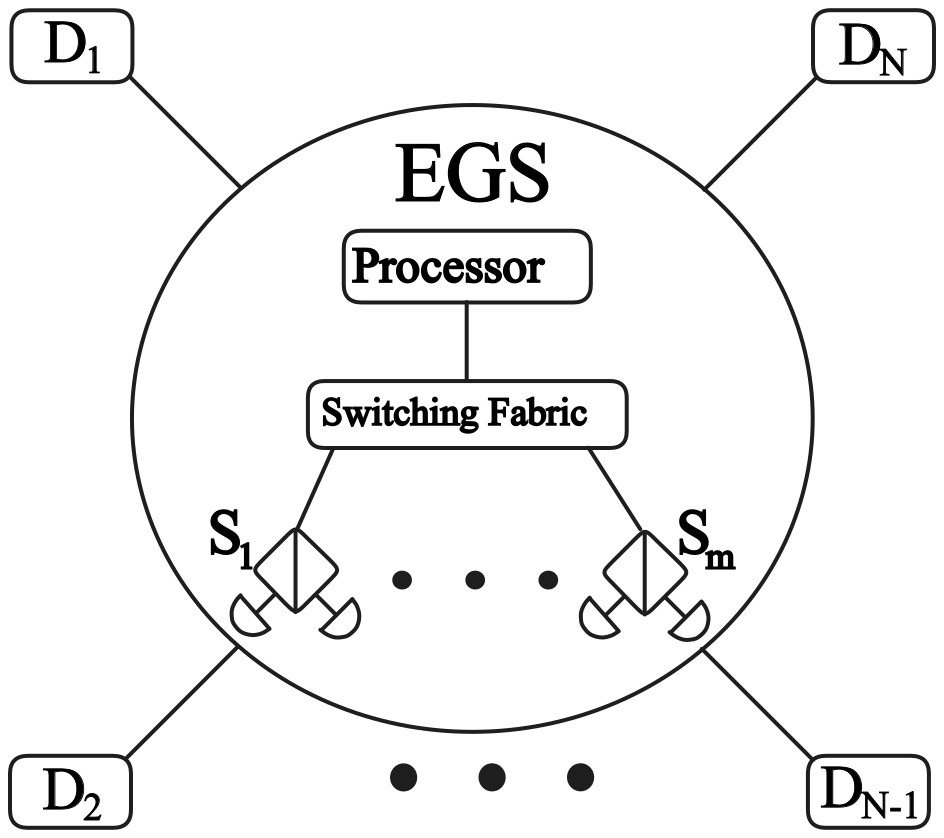}
    \caption{Architecture of an EGS with $m$ resources -- here, BSAs ($S_j$).  $N$ nodes ($D_i$) are connected to the EGS via classical and quantum channels. \vspace{-3mm}}
    \label{fig:EGS-architechture}
\end{figure}

Yet another motivation for placing queues at nodes is that of fairness: the data-center model permits any node to flood the network with its entanglement requests, over time causing the bulk share of the resources to be dedicated to servicing its own demands.
Making the network responsible for offering its services to the nodes removes the latter's ability to directly influence resource allocation within it. We envision that an EGS deploying a $d$-choices policy will provide the aforementioned benefits while at the same time adequately supporting a variety of applications such as distributed quantum computation or entanglement distribution in a high traffic regime -- such an environment ensures that with a high probability, there exists a request within the set of queues sampled by the EGS.

We use mean-field analysis to obtain tight approximations of the average response time -- the time gap between the entry and exit time of an entanglement request. Such analysis has been widely used to develop efficient algorithms for various computer and communication networks \cite{Benam2008ACO}. Due to our design choice of placing queues at nodes, the mean-field limit of the data-center model studied in \cite{Mitzenmacher2001:power-of-two-choices-random-load-balancing}, which has queues at servers, differs from our mean-field limit. Our task in this work is therefore to carry out a performance analysis of the proposed system, which we refer to as the $d$-choices EGS model. Since this study is the first to consider this operation setting of the EGS, we restrict its scope to a single, isolated instance of the device serving nodes in a star topology as depicted in Figure~\ref{fig:EGS-architechture}.
Each node connected to the EGS is assumed to have a number of communication qubits equal to the number of switch resources, enabling it to concurrently participate in entanglement generation with all of them. Because of this assumption, we are able to analyze the model using mean-field techniques. Scenarios where nodes have memory restrictions are reserved for follow-up study. However, we cannot use mean-field techniques for such models; we elaborate more on this in Section~\ref{sec:model-and-mean-field-analysis}.
Our contributions include the following: 
\begin{itemize}[noitemsep,topsep=0pt,leftmargin=*]
    \item We derive a set of differential equations that, in the limit of a large number of request queues, accurately describe the system state evolution of an EGS deploying the $d$-choices policy. This modelling approach enables an asymptotic analysis of the system in the mean-field limit; 

    \item We prove the existence and uniqueness of the equilibrium state that satisfies the mean-field equations;

    \item We provide analytical expressions for request queue size distributions as well as the average response time for an entanglement request. These performance measures are generally challenging to obtain for scenarios where multiple entities (\eg EGS resources) concurrently process requests.
\end{itemize}
Our numerical results support our findings. Namely, they provide supporting evidence that the equilibrium state of the mean-field approximates well the stationary distribution of the system state when the number of request queues is large. Further, we find that the average request response time sees a substantial decrease when $d$ increases from one to two; a further increase in $d$ provides diminishing advantages. We observe approximation errors of less than 5\% for $d = 2$ even for a small system with ten resources and 200 entanglement request queues. We also observe that the approximation error of our model increases as $d$ increases. These results demonstrate the potential of mean-field techniques to study near-term quantum systems where a small number of quantum devices serve a large number of applications. 

The remainder of this manuscript is structured as follows. In Section \ref{sec:background}, we provide relevant background.
In Section~\ref{sec:model-and-mean-field-analysis}, we construct the system model and use it to conduct a mean-field analysis of the system. In Section~\ref{sec:numerical-result}, we present numerical results. We conclude our findings and discuss future directions in Section~\ref{sec:conclusion}.

\section{Background and Related Work}\label{sec:background}
In this section, we describe the EGS architecture and the $d$-choices policy in detail. We then provide an overview of relevant literature. 
\subsection{Entanglement Generation Switch (EGS)} 
\label{subsec:background-egs}

An EGS is a type of quantum entanglement switch (QES) that facilitates entanglement generation between multiple parties. It consists of a central processor tasked with resource scheduling and classical communication with neighboring nodes; resources for entanglement generation -- henceforth we assume for concreteness that these are BSAs although our model is applicable to other types of switch resources; and an optical switching fabric that serves as a bridge between switch interfaces and BSAs (see Figure~\ref{fig:EGS-architechture}). 
\if{false}
\begin{figure}[ht]
    \centering
    \includegraphics[width=0.4\textwidth]{figures/EGS-in-distributed-setting.png}
    \caption{An EGS in the distributed setting. Each quantum device $D_i$ and BSA $S_j$ are connected to the EGS via a classical and a quantum channel.}
    \label{fig:EGS-distributed}
\end{figure}
\fi
When two nodes wish to share a bipartite entangled state (\ie a Bell state/EPR pair \cite{einstein1935can}), they submit a request to the EGS. In this work, we consider the following request handling procedure: when a BSA is available to process a new request, the central processor first informs the nodes that a BSA has been allocated for entanglement generation. Meanwhile, the processor notifies the optical switch that the BSA should only be accessible to said node pair. The optical switch then configures the quantum channel accordingly, and the nodes can proceed with entanglement generation attempts. The result of each attempt is communicated to the nodes classically.

In near-term quantum networks, BSAs in the EGS are likely to be few in number due to device fabrication cost and complexity; hence we assume that they comprise a shared pool of resources as in \cite{Gauthier2023:control-architecture-for-EGS-in-QN} and \cite{Gauthier2024:on-demand-resource-allocation-algo-for-qn-and-its-performance-analysis}. A key challenge for the EGS is thus resource management. While the authors of \cite{Gauthier2023:control-architecture-for-EGS-in-QN} and \cite{Gauthier2024:on-demand-resource-allocation-algo-for-qn-and-its-performance-analysis} managed these resources via request rate control and blocking mechanisms, respectively, our contribution is to study the EGS using the $d$-choices load-balancing policy and to understand queue behavior and the average request processing time. For more details on prior work for the EGS see Section~\ref{subsec:related-work}.

\subsection{The Traditional $d$-Choices Policy} \label{subsec:powd-choices}
The $d$-choices policy is a load-balancing scheme commonly seen in classical data-center models \cite{Mitzenmacher2001:power-of-two-choices-random-load-balancing}. It offers both the benefits of the Join-a-Random-Queue (JRQ) and the Join-the-Shortest-Queue (JSQ) policies; JSQ often has better performance in terms of average request waiting time, whereas JRQ incurs a lower classical communication cost. In a classical data-center with $m$ servers, the $d$-choices policy is referred to as the $\jsq(d)$ policy and is implemented as follows: every server has a queue to store assigned jobs while a job dispatcher assigns an incoming job to the shortest queue size of $d$ randomly chosen servers.
One can easily see that $\jsq(1)$ and $\jsq(m)$ correspond to the JRQ and JSQ, respectively.
One way to analyse $\jsq(d)$ is through a mean-field analysis which we introduce in Section \ref{subsec:mean-field-analysis}. For a more comprehensive summary of techniques developed to study the $d$-choices policy in the data-center setting, we refer the reader to 
\cite{Mitzenmacher2001:power-of-two-choices-techniques-and-results}.

\subsection{Related Work}
\label{subsec:related-work}
Quantum memory-equipped QESes, sometimes referred to as entanglement distribution switches (EDSes) in the literature, have been studied extensively, \eg as in \cite{ Panigrahy2023:capacity-region-QES-with-purification, Vardoyan2021:on-the-sthochastic-analysis-of-an-eds, Vardoyan2021:on-the-exact-analysis-of-an-idealized-quantum-switch, Vardoyan2023:on-the-capacity-region-of-bipartite-and-tripartite-entanglement-switching,Dai2022:the-capacity-region-of-entanglement-switching,nain2020analysis,Avis2023:analysis-of-multipartite-entanglement-distribution-using-a-central-quantum-network-node}.
Architectural differences between EDSes and EGSes give rise to different modes of operation. For instance, the presence of quantum memories in an EDS enables more powerful functionality than that of an EGS, \eg deterministic entanglement swapping \cite{pfaff2013demonstration,Riebe2008:deterministic-entanglement-swapping-with-an-ion-trap-quantum-computer,bhaskar2020experimental}, entanglement distillation \cite{bennett1996purification,deutsch1996quantum,krastanov2019optimized}, and overall more opportunities for dynamic or adaptive decision-making by virtue of being able to store entanglement at the link level (\ie switch-to-node entanglement).
These operational differences result in EDS models that are not directly applicable to our proposed EGS scheme.


Since the limiting resource for an EGS is the BSA, a resource management algorithm is required to facilitate its sharing. 
Gauthier \etal proposed and analyzed a rate-modulation mechanism in \cite{Gauthier2023:control-architecture-for-EGS-in-QN}, as well as a request-blocking mechanism in \cite{Gauthier2024:on-demand-resource-allocation-algo-for-qn-and-its-performance-analysis}. Specifically, the scheme in \cite{Gauthier2023:control-architecture-for-EGS-in-QN} allows node-pairs to submit target entanglement generation rates to the EGS, which are then updated using a rate control protocol to achieve optimal performance according to a metric such as throughput. Our proposed scheme contrasts from this setup in that we assume no control over entanglement demand rates: these are fixed and predetermined in our model.
In \cite{Gauthier2024:on-demand-resource-allocation-algo-for-qn-and-its-performance-analysis}, each node-pair wishing to communicate submits an entanglement request to the EGS. This request is blocked if all BSAs are busy servicing other requests, and the devices will need to resubmit their requests at a later time. This model contrasts from ours in that the former does not allow request queueing.

In this work, we introduce the $d$-choices load-balancing policy as a novel alternative for resource management within an EGS. We use a mean-field limit approach to analyze the system's asymptotic behaviour as the number of queues grows while the server-to-queue ratio stays constant. Previously, a Service-the-Longest-Queue among $d$-choices policy was studied in the context of wireless networks \cite{Murat_Alanyali}. Their model involves one transceiver and $k$ mobile stations, each with a dedicated wireless transmission channel that is available for data transmission only probabilistically due to fluctuations in channel conditions. Each mobile station has a queue to store its packets that are awaiting transmission. The transceiver transmits a packet from one mobile station at a time; the latter must have access to an available wireless channel. Whenever the transceiver becomes idle, it samples $d$ stations at random from the set of all mobile stations with available wireless channels, and chooses the longest of these $d$ queues for processing. The performance of this system was analyzed through mean-field techniques. Our model is different from that of \cite{Murat_Alanyali} as we have multiple servers. In addition, the job service rate and the rate with which an idle server re-samples queues can be different in our model, while they were assumed to be the same in \cite{Murat_Alanyali}. In \cite{Murat_Alanyali}, the authors also studied the policy where the server always processes a job from the longest of all queues with available wireless channels, when the number of queues becomes large. We do not study such a policy as it has a high communication cost.

\section{Model and Mean-field Analysis}\label{sec:model-and-mean-field-analysis}
In this section, we introduce our $d$-choices EGS model and state our assumptions. We then perform a mean-field analysis
by deriving the system's mean-field equations and expressions describing the corresponding equilibrium state. 
\subsection{Model and Assumptions}\label{subsec:model-and-assumptions}

Since we model the EGS as a queueing system, in the following discussion we introduce terminology that will make our subsequent comparison to the data-center model (described in Section~\ref{subsec:powd-choices}) straightforward. Recall that $m$ denotes the number of BSAs within the EGS and that $N$ is the number of nodes connected to the EGS, each with $m$ communication qubits. The BSAs effectively function as \emph{servers}, and we refer to them as such throughout the analysis.  
We define a \emph{flow} $f$ to be a node-pair, $f = (i, j)$, $1 \le i < j \le N$, that desires entanglement. We denote the set of all flows with $\mathcal{F}$, \ie $\mathcal{F} \subseteq \set{(i, j) : 1 \le i < j \le N}$, and $n = |\mathcal{F}|$ the total number of flows. Clearly $1 \le n \le \binom{N}{2}$. 
A \emph{job} is an entanglement request from a flow $f\in \mathcal{F}$. Each job represents the creation of exactly one entanglement between the two nodes of $f$. We assume all jobs of a flow have identical service time distributions -- a reasonable assumption in settings where parameters that drive the entanglement generation process are constant and involuntary parameter drift is insignificant. 


We assume that flow $f$'s jobs arrive according to a Poisson process with parameter $\lambda_f$ and that jobs from different flows arrive independently from each other. Each flow has a queue where jobs await service on a First-Come-First-Served (FCFS) basis. The term \emph{queue size} refers to the number of existing jobs in a queue. We assume successful entanglement generation for flow $f$ has an exponential service time distribution with rate $\mu_f$, which reflects the more realistic view of entanglement generation as a succession of Bernoulli trials with low success probability.
 Since BSAs can operate in parallel and do not affect each other, we assume servers process jobs independently. Finally, we consider a homogeneous system that satisfies $\lambda_f=\lambda$ and $\mu_f=\mu$, $\forall f\in\mathcal{F}$. 


Upon job completion, a (newly idle) server immediately samples $d$ queues at random to acquire a new job to process. Since in our model queues are situated at flows (\ie both nodes of a flow's node-pair keep track of enqueued jobs, but the EGS does not do any active tracking), all $d$ selected queues may be empty. In this event, the server is said to have carried out a \textit{failed sampling}, and remains idle for a period that is exponentially distributed with parameter $\gamma$, before re-sampling $d$ (potentially different) queues. This mechanism exerts a lighter classical communication strain within the system compared to immediate re-sampling. We thus refer to $\gamma$ as the back-off rate. In a near-term EGS system, we expect that entanglement requests will experience a large service time due to photon losses in fiber and failed optical entanglement swaps at BSAs; the latter has a 0.5 success probability without ancilla qubits \cite{grice2011arbitrarily}. It is thus desirable to have $\gamma > \mu$ so that BSAs have shorter idle periods.

Since the EGS serves the longest among $d$ randomly selected queues, we say it follows the $\slq(d)$ policy. Then $\slq(1)$ and $\slq(n)$ are the Service-a-Random-Queue (SRQ) and the Service-the-Longest-Queue (SLQ) policies, respectively. We study the effects of queue placement by comparing the performance of JSQ($d$) and SLQ($d$) in Section~\ref{sec:numerical-result}.
\subsection{Derivation of Mean-field Equations}\label{subsec:mean-field-analysis}
In the analysis that follows, $\bb{N}_{0}$ represents the non-negative integers and $\bb{E}[\cdot]$ is the expectation operator.
With our homogeneity assumption, we denote all arrival rates with $\lambda$ and all service rates as $\mu$. 
We define $r \equiv \frac{m}{n}$ as the server-to-queue ratio.
We next define the following random variables:
\begin{itemize}[noitemsep,topsep=0pt,leftmargin=*]
    \item $\tilde{X}_{i} (t)$, $i \in \bb{N}_{0}$, is the number of flows with at least $i$ jobs at time $t$;
    
    \item $X(t)=(X_{i}(t))_{i \in \bb{N}_{0}}$ where $X_{i}(t)= \frac{1}{n} \tilde{X}_{i} (t)$ is the fraction of flows with at least $i$ jobs at time $t$;


    \item $\tilde{Y}(t)$ is the number of servers servicing a job at time $t$;
    
    \item $Y(t) = \frac{1}{m} \tilde{Y} (t)$ is the fraction of servers servicing a job at time $t$.
\end{itemize}
By convention, we write $\bar{\alpha} = 1 - \alpha$ and $\bar{\alpha}^d = (1 - \alpha)^d$ for $\alpha \in [0, 1]$. For example, $X_{i}^{d}(t) = (X_i(t))^d$, $\bar{Y}(t) = 1 - Y(t)$, and $\bar{X}_{i}^{d}(t) = (1 - X_i(t))^d$. We note that the empirical process $\{(X(t),Y(t))\}_{t\geq 0}$ is a Markov process. 
\begin{remark}
Since each node has $m$ communication qubits, it may participate in entanglement generation at all BSAs simultaneously. Hence, a Markovian representation need not track the identity of each flow currently in service. In the scenario where nodes have fewer than $m$ communication qubits, a Markovian representation must include identities of flows currently being serviced, since in this case $\{(X(t), Y(t))\}_{t\geq 0}$ is not a Markov process. As a result, mean-field analysis that requires the empirical process to be a Markov process is not applicable. The average response time of models with nodes having less than $m$ communication qubits will be lower bounded by our model (nodes have $m$ communication qubits).
\end{remark}

Next we derive mean-field equations without giving a formal proof of the existence of the mean-field limit due to space constraints. This proof follows easily from the theory of the convergence of Markov processes as in \cite{Turner_1998,Mitzenmacher2001:power-of-two-choices-random-load-balancing}.

\begin{thm}
    The \emph{mean-field equations (MFEs)} of the $\slq(d)$ policy applied to the EGS are given by
    \begin{align}
            \frac{d}{dt} x_i (t) &= \lambda (x_{i-1} (t) - x_i (t))\nonumber\\
            &\quad- r (\gamma \bar{y}(t) + \mu y(t)) (\bar{x}_{i+1}^d (t) - \bar{x}_i^d (t)),\label{eq:MFE-queues}\\
        \frac{d}{dt} y(t) &= \gamma \bar{y}(t) (1 - \bar{x}_1^d (t))  - \mu y(t) \bar{x}_1^d (t), \label{eq:MFE-servers}
    \end{align}
    where $i \geq 1$ and $x_0(t)=1$ for all $t \in [0, \infty)$. We refer to \eqref{eq:MFE-queues} as the flow equations and to \eqref{eq:MFE-servers} as the server equation. The process $(x(t),y(t))_{t\geq 0}$ is called the mean-field limit that represents the limit of $\{(X(t),Y(t))\}_{t\geq 0}$ as $n\to\infty$, where $x(t)=(x_i(t),i\geq 0)$.
\end{thm}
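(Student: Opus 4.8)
The plan is to treat $\{(X^{(n)}(t),Y^{(n)}(t))\}_{t\ge 0}$, with $m=rn$, as a density-dependent jump Markov process, to compute the action of its generator on the normalized coordinate maps, and then to pass to the limit $n\to\infty$ via the standard functional law of large numbers for such processes, following the template of \cite{Turner_1998,Mitzenmacher2001:power-of-two-choices-random-load-balancing}. The drift that emerges is exactly the right-hand side of \eqref{eq:MFE-queues}--\eqref{eq:MFE-servers}.

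First I would enumerate the transitions of the pre-limit chain. Arrivals are not load-balanced: a flow holding exactly $i-1$ jobs receives an arrival at rate $\lambda$, there are $\tilde X_{i-1}(t)-\tilde X_i(t)$ such flows, and the arrival increments only $\tilde X_i$ (the flow leaves the set counted by $\tilde X_1,\dots,\tilde X_{i-1}$ for the one counted by $\tilde X_1,\dots,\tilde X_i$), so $\tilde X_i\mapsto\tilde X_i+1$ at total rate $n\lambda(X_{i-1}-X_i)$. A sampling event is triggered at rate $\gamma$ by each of the $m\bar Y(t)$ idle servers (at the end of a back-off period) and at rate $\mu$ by each of the $mY(t)$ busy servers (upon completing its current job). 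In a sampling event the server inspects $d$ uniformly random queues; up to an $O(1/n)$ correction that is immaterial in the limit, the longest of them holds exactly $i\ge1$ jobs with probability $\bar X_{i+1}^d-\bar X_i^d$, and all $d$ are empty with probability $\bar X_1^d$. If the longest inspected queue has $i\ge1$ jobs, the server takes one job from it ($\tilde X_i\mapsto\tilde X_i-1$) and becomes or stays busy; if all $d$ are empty, the server becomes or stays idle. Hence $\tilde Y$ rises by one exactly when an idle server finds a nonempty queue, and falls by one exactly when a just-completed server finds all $d$ inspected queues empty. Collecting these rates and jumps --- dividing the resulting $\tilde X_i$- and $\tilde Y$-drifts by $n$ and $m$ respectively and using $r=m/n$, and noting $\sum_{i\ge1}(\bar X_{i+1}^d-\bar X_i^d)=1-\bar X_1^d$ since $X_i\to0$ as $i\to\infty$ --- the generator $\mathcal{L}_n$ acts on the normalized coordinates as
\begin{align*}
(\mathcal{L}_n X_i)(X,Y)&=\lambda(X_{i-1}-X_i)-r(\gamma\bar Y+\mu Y)(\bar X_{i+1}^d-\bar X_i^d),\\
(\mathcal{L}_n Y)(X,Y)&=\gamma\bar Y(1-\bar X_1^d)-\mu Y\bar X_1^d,
\end{align*}
\ie the right-hand sides of \eqref{eq:MFE-queues}--\eqref{eq:MFE-servers} evaluated at $(X,Y)$.

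Next I would pass to the limit. By Dynkin's formula, each $X_i^{(n)}(\cdot)$ and $Y^{(n)}(\cdot)$ equals its initial value plus $\int_0^t(\mathcal{L}_n X_i)(X^{(n)}(s),Y^{(n)}(s))\,ds$ (and likewise for $Y$) plus a martingale; since the jumps of the normalized process are $O(1/n)$ while the total jump rate is $O(n)$, these martingales have quadratic variation $O(1/n)$ and vanish uniformly on $[0,T]$ in $L^2$. The vector field above is Lipschitz on the state space with respect to a suitably weighted $\ell_1$ norm, and \eqref{eq:MFE-queues}--\eqref{eq:MFE-servers} with $x_0\equiv1$ and the prescribed initial condition has a unique solution; a Gronwall estimate then yields $(X^{(n)},Y^{(n)})\to(x,y)$ in $D([0,T])$ for every $T$, which is the claimed mean-field limit.

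The main obstacle is the rigorous convergence rather than the drift bookkeeping. Because the $x$-component lives in an infinite-dimensional space, one must fix an appropriate Polish state space, establish tightness of $\{(X^{(n)},Y^{(n)})\}$ there, and justify exchanging the limit $n\to\infty$ with the sum over queue levels in the generator; the key estimate is a uniform-in-$n$ tail bound, $\sup_n\sup_{t\le T}\bb E[\sum_{i>M}X_i(t)]\to0$ as $M\to\infty$, obtained by dominating each flow's queue length on $[0,T]$ by its own (Poisson) arrival stream. As the authors note, these steps are by now routine. The one feature specific to this model worth isolating is that a queue is drained not at a fixed per-busy-server rate $\mu$ but at the state-dependent aggregate rate $r(\gamma\bar y+\mu y)$, since idle servers also sample --- and hence drain queues --- at rate $\gamma$ through the back-off mechanism; this is why $\gamma\bar y+\mu y$, rather than $\mu y$, multiplies the service term in \eqref{eq:MFE-queues}, and it collapses to the constant $\mu$ (decoupling the flow equations from $y$) precisely when $\gamma=\mu$, the case treated in \cite{Murat_Alanyali}.
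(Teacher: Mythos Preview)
Your proposal is correct and follows essentially the same approach as the paper: enumerate the transitions (arrivals, idle-server re-sampling, busy-server completion-and-sampling), compute the resulting drift in the normalized coordinates to obtain exactly \eqref{eq:MFE-queues}--\eqref{eq:MFE-servers}, and defer the rigorous $n\to\infty$ convergence to the standard density-dependent Markov machinery of \cite{Turner_1998,Mitzenmacher2001:power-of-two-choices-random-load-balancing}. Your write-up is in fact more formal than the paper's own derivation, which uses a $\Delta t$-heuristic for the drift and explicitly omits the convergence proof due to space constraints; your generator/Dynkin/Gronwall sketch and the tail-control remark fill in precisely what the paper leaves implicit.
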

An intuitive explanation for the MFEs is as follows: consider first the flow equations \eqref{eq:MFE-queues}. Fix an arbitrary $i \in \bb{N}_{0}$, a sufficiently small $\Delta t > 0$, and let $t \in [0, \infty)$. The following three random variables contribute to change in $\tilde{X}_i(t)$:
\begin{itemize}[noitemsep,topsep=0pt,leftmargin=*]
    \item $U_i$ is the change in $\tilde{X}_i(t)$ due to job arrivals during the period $[t, t + \Delta t]$;
    
    \item $V_i$ is the change in $\tilde{X}_i(t)$ due to a successful re-sampling by an idle server during the period $[t, t + \Delta t]$;
    
    \item $W_i$ is the change in $\tilde{X}_i(t)$ due to a successful sampling by a server upon job completion during $[t, t + \Delta t]$.
\end{itemize}
The expected number of new job arrivals during $[t, t + \Delta t]$ is $n \lambda \Delta t$.
Moreover, the probability that a new job arrives at a queue with $i-1$ jobs is $X_{i-1}(t) - X_i(t)$. Thus,
\begin{align}
    \bb{E}[U_i] 
    & =  n \lambda \Delta t (X_{i-1}(t) - X_i(t)).
\end{align}
Next, consider $V_i$ and $W_i$. When a server selects $d$ queues at random, $(\bar{X}_{i+1}^d(t) - \bar{X}_{i}^d(t))$ is the probability that they all have at most $i$ jobs, with at least one queue having exactly $i$ jobs. We assume that queues are sampled with replacement which is a valid assumption when $n$ is large. Further, the expected number of re-sampling and sampling operations in a $\Delta t$-sized time interval is given by $m \bar{Y}(t) \gamma \Delta t$ and  $m Y(t) \mu \Delta t$, respectively.
We thus obtain
\begin{align}
    \bb{E}[V_i] & =  m \bar{Y}(t) \gamma \Delta t (\bar{X}_{i+1}^d(t) - \bar{X}_{i}^d(t))\label{eq:EV},\\
    \bb{E}[W_i] & =  m Y(t) \mu \Delta t (\bar{X}_{i+1}^d(t) - \bar{X}_{i}^d(t))\label{eq:EW}.
\end{align}

The expected change in $\tilde{X}_i(t)$ during $[t, t + \Delta t]$ is therefore
\begin{align}
      \bb{E} [U_i - V_i &- W_i] = \Delta t \big[ n \lambda (X_{i-1}(t) - X_i(t))\nonumber\\
    & - m (\gamma \bar{Y}(t) + \mu Y(t)) 
     (\bar{X}_{i+1}^d(t) - \bar{X}_{i}^d(t)) \big].
\end{align}
Dividing by $n$ and $\Delta t$, we obtain
\begin{align}
    &\frac{1}{\Delta t} \bb{E}[X_{i} (t+\Delta t)-X_i(t)|\ (X(t),Y(t))] =\nonumber\\
     &\lambda (X_{i-1}(t) - X_i(t))
    - r (\gamma \bar{Y}(t) 
    + \mu Y(t)) (\bar{X}_{i+1}^d(t) - \bar{X}_{i}^d(t)).\label{eq:flow_drift}
\end{align}

To derive \eqref{eq:MFE-servers}, we observe that a change occurs in $Y(t)$ in the interval $[t,t+\Delta t]$  when a busy server completes a job and fails to obtain a new job via $d$-choices sampling, or when an idle server becomes busy via successful $d$-choices sampling. The probability of a failed sampling is given by the probability that all sampled $d$ queues are empty (again assuming sampling with replacement), \ie $\bar{X}_1^d(t)$. Finally, to obtain MFEs we replace $(X(t), Y(t))$ with the mean-field $(x(t), y(t))$ (which is a deterministic process) in the flow and server drift equations, and let $\Delta t \to 0$. 

As a comparison, the MFEs for the $\jsq(d)$ model of \cite{Mitzenmacher2001:power-of-two-choices-random-load-balancing}, which assumes $r=1$,
\begin{equation}
    \frac{d}{dt} x_i(t) = \frac{\lambda}{r} ( x_{i-1}^{d} (t) - x_{i}^{d} (t) ) - \mu (x_i (t) - x_{i+1}(t)). 
\end{equation}
Due to the queue placement in this model, the set of equations above captures the dynamics of the server queues, with the equilibrium point given by $\pi_i = (\frac{\lambda}{r \mu})^{\frac{d^i - 1}{d - 1}}$ for $d \ge 2$. 

\subsection{Equilibrium State of the Mean-Field}
Given the MFEs, we can characterize the system's performance using the equilibrium state, or the fixed point of the mean-field limit. An equilibrium state of the mean-field limit is the solution satisfying $\frac{d y(t)}{dt}=0$ and $\frac{d x_i(t)}{dt}=0$ for all $i\geq 1$. Let us denote by $(\pi, \ee)$ the fixed point satisfying the MFEs, where $\pi = (\pi_i)_{i \ge 0}$ is an infinite sequence with $\pi_i \in [0, 1]$ for all $i \ge 0$ satisfying the flow equations \eqref{eq:MFE-queues} and $\ee \in [0, 1]$ satisfying the server equation \eqref{eq:MFE-servers}.



Next, we characterize the equilibrium state of the mean-field. In the following theorem (Theorem~\ref{thm:fpt}), the equilibrium state of the mean-field exists as long as $\ee=\frac{\lambda}{r\mu}<1$, which leads to $\frac{\lambda}{r(\gamma\bar{\ee}+\mu\ee)}<1$. Note that $\frac{\lambda}{r\mu}<1$ is the necessary condition for the stability of the system. Here, the probability with which a server is busy equals $\frac{\lambda}{r\mu}$ and $r(\gamma\bar{\ee}+\mu\ee)$ is the rate at which a queue is selected for processing when $d=1$. It is of interest to find equilibrium states $(\pi,\ee)$ that satisfy $\sum_{i\geq 1}\pi_i<\infty$ (indicating finite average queue size under $\pi$) as we want to approximate the stationary distribution of a stable system with a finite average queue size.
\begin{thm}\label{thm:fpt}
    Among the class of equilibrium states $(\pi,\ee)$ with $\sum_{i\geq 1}\pi_i<\infty$, if $\frac{\lambda}{r\mu}< 1$ there exists a unique equilibrium state of the mean-field. Furthermore, this equilibrium state satisfies the following recursive equations
    \begin{align}
        \pi_{i+1} & = 1 - \left( 1 - \frac{\lambda}{r (\gamma \bar{\ee} + \mu \ee)} \pi_i\right)^{\frac{1}{d}}, \label{eq:eqm-solution-flows}
    \end{align}
    where $i \ge 1$, $\ee = \frac{\lambda}{r \mu}$,  $\pi_0 = 1$, and $\pi_1 = 1 - \left( 1 - \frac{\lambda}{r (\gamma \bar{\ee} + \mu \ee)} \right)^{\frac{1}{d}}$.
\end{thm}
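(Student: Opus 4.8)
The plan is to set the time derivatives in the MFEs \eqref{eq:MFE-queues}--\eqref{eq:MFE-servers} to zero and solve the resulting recursion. First I would treat the server equation \eqref{eq:MFE-servers}: setting $\frac{d}{dt}y(t)=0$ gives $\gamma\bar\ee(1-\bar x_1^d)=\mu\ee\bar x_1^d$ at any equilibrium, but a cleaner route is to sum the stationary flow equations. Summing \eqref{eq:MFE-queues} over $i\ge 1$ telescopes both sides: the arrival term gives $\lambda x_0 = \lambda$ (using $x_0=1$ and $x_i\to 0$, which is where $\sum_i\pi_i<\infty$ enters), and the service term telescopes to $r(\gamma\bar\ee+\mu\ee)(1-\bar x_1^d)$ provided $\bar x_i^d\to 1$, i.e. $x_i\to 0$. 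Setting this balance to zero — or equivalently observing that the total arrival rate must equal the total departure rate — yields $\lambda = r(\gamma\bar\ee+\mu\ee)(1-\bar\ee_1^{\,d})$ where I write $\ee_1=\pi_1$. Independently, from the server equation one gets $\gamma\bar\ee(1-\bar\ee_1^d)=\mu\ee\bar\ee_1^d$; combining these two relations I expect to solve for $\ee$ and find $\ee=\frac{\lambda}{r\mu}$, which also forces $\frac{\lambda}{r\mu}<1$ as the existence condition and makes $\gamma\bar\ee+\mu\ee>0$ so the denominators below are well-defined.

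Next I would derive the recursion for the $\pi_i$. Writing $\rho \equiv \frac{\lambda}{r(\gamma\bar\ee+\mu\ee)}$, the stationary version of \eqref{eq:MFE-queues} reads $\lambda(\pi_{i-1}-\pi_i) = r(\gamma\bar\ee+\mu\ee)(\bar\pi_{i+1}^d-\bar\pi_i^d)$ for $i\ge 1$, i.e. $\bar\pi_{i+1}^d-\bar\pi_i^d = \rho(\pi_{i-1}-\pi_i)$. I would then argue, by summing this telescoping identity from index $i$ up to $\infty$ and using $\pi_j\to 0$ (hence $\bar\pi_j^d\to 1$), that $1-\bar\pi_i^d = \rho\,\pi_{i-1}$ for all $i\ge 1$; equivalently $\bar\pi_i^d = 1-\rho\pi_{i-1}$, so $\pi_i = 1-(1-\rho\pi_{i-1})^{1/d}$, which is exactly \eqref{eq:eqm-solution-flows} after an index shift, with the stated $\pi_1 = 1-(1-\rho)^{1/d}$ coming from $\pi_0=1$. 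I should check that the telescoped tail sum is legitimate: the partial sums $\sum_{i\le j}(\pi_{i-1}-\pi_i)=\pi_0-\pi_j$ converge, and $\bar\pi_j^d-\bar\pi_i^d$ likewise converges since $\pi_j\to0$; a small monotonicity argument shows $\pi_i$ is nonincreasing and bounded in $[0,1]$, so everything is summable.

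For uniqueness and existence I would proceed as follows. Existence: given $\frac{\lambda}{r\mu}<1$, set $\ee=\frac{\lambda}{r\mu}$ and $\rho=\frac{\lambda}{r(\gamma\bar\ee+\mu\ee)}$; note $\rho\in(0,1)$ because $\gamma\bar\ee+\mu\ee\ge\mu\ee$ (in fact $\ge\min(\gamma,\mu)>0$ and, since we want $\gamma>\mu$ typically, $\gamma\bar\ee+\mu\ee\ge\mu\ee$ gives $\rho\le\ee<1$). Define $\pi_i$ by the recursion with $\pi_0=1$; one checks inductively that $\pi_i\in[0,1]$ and $\pi_{i+1}\le\pi_i$ (the map $s\mapsto 1-(1-\rho s)^{1/d}$ is increasing and lies below the identity on $[0,1]$ since $1-\rho s \ge (1-s)$ there), so $\pi_i$ decreases to a limit $\ell$ with $\ell = 1-(1-\rho\ell)^{1/d}$; the only root in $[0,1]$ of that is $\ell=0$ when $\rho<1$, giving summability (indeed geometric-type decay). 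One then verifies this $(\pi,\ee)$ solves the MFEs. Uniqueness: any summable equilibrium forces $x_i\to0$, hence the telescoping arguments above apply verbatim and pin down $\ee=\frac{\lambda}{r\mu}$ and then the entire sequence $\pi$ via the recursion, so it is unique. The main obstacle I anticipate is the bookkeeping around the infinite telescoping sums — justifying the interchange of limit and summation, and nailing down $\ee=\frac{\lambda}{r\mu}$ cleanly from the combination of the summed flow balance and the server equation rather than by an ad hoc guess.
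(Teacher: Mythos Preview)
Your proposal is correct and follows the same overall strategy as the paper: set the MFE derivatives to zero, telescope the stationary flow equations, and use the tail condition $\pi_j\to 0$ (coming from $\sum_j\pi_j<\infty$) together with the server equation to pin down $\ee=\lambda/(r\mu)$ and the recursion for $\pi_i$. The executions differ in two minor respects. First, to obtain the recursion you sum the stationary flow equations from index $i$ to $\infty$ and read off $1-\bar\pi_i^d=\rho\,\pi_{i-1}$ in one stroke; the paper instead takes partial sums from $0$ to $j$, lets $j\to\infty$ to isolate $\pi_1$, and then recovers \eqref{eq:eqm-solution-flows} by an induction on $i$. Your tail-sum route is a bit more economical. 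Second, for summability the paper shows by induction that $\pi_i\le\pi_i^*=\rho^i$ (the $d=1$ equilibrium) and invokes $\sum_i\rho^i<\infty$; you instead argue that the map $s\mapsto 1-(1-\rho s)^{1/d}$ is increasing and lies strictly below the identity on $(0,1]$, so $\pi_i\downarrow 0$, and appeal to the asymptotic ratio $\pi_{i+1}/\pi_i\to\rho/d<1$ for summability. Both work; the paper's comparison gives an explicit summable majorant immediately, while yours needs that extra ratio observation to upgrade $\pi_i\to 0$ to $\sum_i\pi_i<\infty$. One small wrinkle: your justification that $\rho<1$ via $\gamma\bar\ee+\mu\ee\ge\mu$ tacitly assumes $\gamma\ge\mu$; the clean general argument is $r(\gamma\bar\ee+\mu\ee)=r\gamma\bar\ee+\lambda>\lambda$, which holds for any $\gamma>0$.
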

\begin{proof}
    Since $(\pi, \ee)$ is the equilibrium state of the MFEs, it follows that $\frac{d}{dt} \pi_i = 0$ for all $i \in \bb{N}_{0}$ and $\frac{d}{dt} \ee = 0$. This property, along with \eqref{eq:MFE-queues} and \eqref{eq:MFE-servers}, means that $(\pi,\ee)$ satisfies
    \begin{align}
         \lambda (\pi_{i-1} - \pi_i) - r (\gamma \bar{\ee} + \mu \ee) (\bar{\pi}_{i+1}^d - \bar{\pi}_{i}^d) &= 0, \label{eq:eqm-flows}\\
         \gamma \bar{\ee}(1 - \bar{\pi}_1^d)- \mu \ee\bar{\pi}_1^d &= 0. \label{eq:eqm-servers} 
    \end{align}
    Next, for $i \in \bb{N}_{0}$, let $q_i \coloneqq \pi_i - \pi_{i+1}$; then using \eqref{eq:eqm-flows},
    \begin{align}
        \sum_{i=0}^{j} q_i & = \sum_{i=0}^{j} \frac{r (\gamma \bar{\ee} + \mu \ee)}{\lambda} (\bar{\pi}_{i+2}^d - \bar{\pi}_{i+1}^d) \nonumber \\
        & = \frac{r (\gamma \bar{\ee} + \mu \ee)}{\lambda}( \bar{\pi}_{j+2}^d-\bar{\pi}_1^d) .
        \label{eq:sumqi}
    \end{align}
    On the other hand, $\sum_{i=0}^{j} q_i = \pi_0 - \pi_{j+1} = 1 - \pi_{j+1}$. Using this with \eqref{eq:sumqi} yields
    \begin{equation}
        1 - \pi_{j+1}  = \frac{r (\gamma \bar{\ee} + \mu \ee)}{\lambda}( \bar{\pi}_{j+2}^d-\bar{\pi}_1^d) . \label{eq:partial-sums-Qi}
    \end{equation}
    Since we are interested in equilibrium states that satisfy $\sum_{j\geq 1}\pi_j<\infty$, we use the condition that $\lim_{j \to \infty} \pi_j = 0$ and $\lim_{j \to \infty} \bar{\pi}_{j} = 1$. Applying the limit to both sides of \eqref{eq:partial-sums-Qi} results in
    \begin{align}
        1 
        & = \lim_{j \to \infty} \frac{r (\gamma \bar{\ee} + \mu \ee)}{\lambda}( \bar{\pi}_{j+2}^d-\bar{\pi}_1^d)
         =  \frac{r (\gamma \bar{\ee} + \mu \ee)}{\lambda}( 1-\bar{\pi}_1^d).
    \end{align}
    Rearranging the above equation, we get
    \begin{equation}
        \pi_1 = 1 - \left( 1 - \frac{\lambda}{r (\gamma \bar{\ee} + \mu \ee)} \right)^{\frac{1}{d}}. \label{eq:eqm-solution-pi-1}
    \end{equation}
   By substituting \eqref{eq:eqm-solution-pi-1} into \eqref{eq:eqm-servers}, we obtain $\ee = \frac{\lambda}{r \mu}$. 
Next, we show that \eqref{eq:eqm-solution-flows} is valid. By rearranging \eqref{eq:eqm-flows}, we obtain
    \begin{align}
        \pi_{i+1} & = 1 - \left( \bar{\pi}_i^d + \frac{\lambda}{r (\gamma \bar{\ee} + \mu \ee)} (\pi_{i-1} - \pi_{i})  \right)^{\frac{1}{d}}\label{eq:FP_eq2}. 
    \end{align}
    In \eqref{eq:FP_eq2}, by choosing $i=1$, we obtain
    \begin{align}
        \pi_{2} & = 1 - \left( \bar{\pi}_1^d + \frac{\lambda}{r (\gamma \bar{\ee} + \mu \ee)} (1 - \pi_{1})  \right)^{\frac{1}{d}}.\label{eq:pi_2}     \end{align}
        By substituting \eqref{eq:eqm-solution-pi-1} into $\bar{\pi}_1^d$ of \eqref{eq:pi_2} we get
        \begin{align}
        \pi_{2} & = 1 - \left( 1- \frac{\lambda}{r (\gamma \bar{\ee} + \mu \ee)} \pi_{1}  \right)^{\frac{1}{d}}.     \end{align}
        Similarly, for $i\geq 2$, by expanding the expression for $\bar{\pi}_i^d$ in \eqref{eq:FP_eq2} we obtain \eqref{eq:eqm-solution-flows}.

        Next, we show that the equilibrium state that we found satisfies $\sum_{i\geq 1}\pi_i<\infty$. It suffices to show that $\pi_i\leq \pi^*_i$ by induction on $i \ge 1$, where $(\pi^*,\ee^*)$ is the equilibrium state of the mean-field when $d=1$, and $\pi^*=(\pi^*_i,i\geq 0)$. Furthermore, it can be checked that $\sum_{i\geq 1}\pi^*_i=\frac{\rho}{1-\rho}$ where $\rho=\frac{\lambda}{r(\gamma\bar{\ee}+\mu\ee)}$. Observe that $1-p^{\frac{1}{d}}\leq 1-p$ for any $p \in [0, 1]$ and $d \in \bb{N}$. For the base case, we get
        \begin{align}
            \pi_{1} & = 1 - \left( 1- \frac{\lambda}{r (\gamma \bar{\ee} + \mu \ee)
            }  \right)^{\frac{1}{d}}\\
            & \leq 1 - \left( 1- \frac{\lambda}{r (\gamma \bar{\ee} + \mu \ee)}   \right)\\
            & = \frac{\lambda}{r (\gamma \bar{\ee} + \mu \ee)}=\pi^*_1.
        \end{align}
        For the inductive step, we can rewrite \eqref{eq:eqm-solution-flows} and get
        \begin{align}
            \pi_{i+1} & = 1 - \left( 1- \frac{\lambda}{r (\gamma \bar{\ee} + \mu \ee)} \pi_i \right)^{\frac{1}{d}}\\
            & \leq 1 - \left( 1- \frac{\lambda}{r (\gamma \bar{\ee} + \mu \ee)}\pi_i \right)\\
            & \leq \frac{\lambda}{r (\gamma \bar{\ee} + \mu \ee)}\pi^*_i = \pi^*_{i+1}.
        \end{align}
        Thus, $\pi_{i+1} \le \pi_{i+1}^{*}$ whenever $\pi_{i} \le \pi_{i}^{*}$ as desired.
\end{proof}


It is important to note that we still need to prove that the equilibrium state of the mean-field yields the stationary probability distribution of a queue as $\pi$ and the stationary probability that a server is busy as $\ee$ when $n\to\infty$. A sufficient condition to establish this result is to show the global stability of the mean-field, which we leave for follow-up work. However, as we show in Section \ref{sec:numerical-result}, our numerical results provide supporting evidence that the equilibrium state of the mean-field approximates the stationary distribution of the system when $n$ is large.

\section{Numerical Results}\label{sec:numerical-result}

 In this section, we simulate an EGS deploying the $d$-choices policy and make comparisons to analytical results. The simulation parameters are configured with $\narrive = 10^9$ job arrivals per simulation run, service rate $\mu = 1$,  back-off rate $\gamma = 1$, $N = 21$, $m = 10$, $n = 200 < \binom{N}{2}$ flows/queues for a server-to-queue ratio of $r = \frac{m}{n} = 0.05$, $\frac{\lambda}{r \mu} = 0.90$, and time units are given in seconds, unless otherwise specified. For our simulations, we choose $m \ll n$ as near-term EGSes will have a limited number of resources.

First, we explain how we compute the average response time using simulations. We denote by $t_i$ and $u_i > t_i$ the arrival and departure time of job $i$, respectively. The \emph{response time} of job $i$ is $w(t_i) := u_i - t_i$. The number of jobs departed by time $t > 0$ is denoted by $\ndepart(t)$. The \emph{average response time at time $t > 0$} is given by $\bar{w}(t) = \frac{1}{\ndepart(t)} \sum_{t_i < t} w(t_i)$. The \emph{average response time} over an entire simulation run is $\bar{w} = \bar{w}(t_{\narrive})$, where $t_{\narrive}$ is the arrival time of the last job in the simulation. Let us denote by $\tilde{w}_{\pi}$ the analytically derived average waiting time at equilibrium. By Little's Law \cite[Section 13.7]{vanMieghem2014:performance-analysis-of-complex-networks-and-systems}, we have $\tilde{q}(\pi) = \lambda \tilde{w}_{\pi}$, where $\tilde{q}(\pi)=\sum_{i\geq 1}\pi_i$ is the average queue size at equilibrium, and $\lambda$ is the request arrival rate. Thus, $\tilde{w}_{\pi}=\frac{\tilde{q}(\pi)}{\lambda}$. Given the simulation-based average response time $\bar{w}$ and its analytical approximation $\tilde{w}_{\pi}+1/\mu$, the percentage error between the two values is computed as $100\times \left| 1 - \frac{\tilde{w}_{\pi} + \frac{1}{\mu}}{\bar{w}} \right|\% $. 


\begin{figure}[t]
    \centering
    \includegraphics[width = 0.47\textwidth]{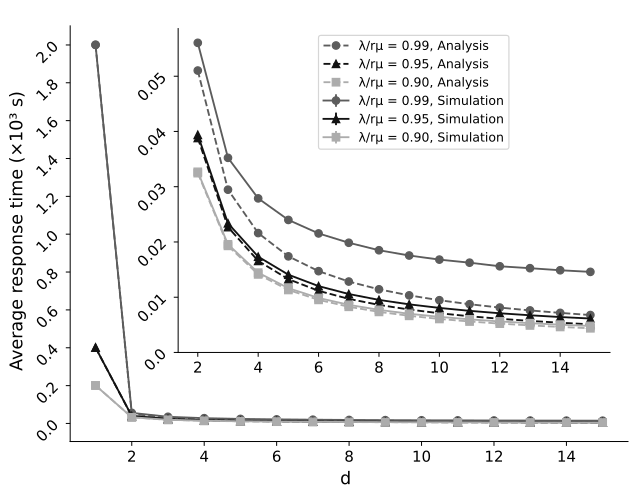}
    \vspace{-3mm}
    \caption{Average response time of an EGS deploying the $d$-choices policy, as a function of the number of choices $d$.}
    \label{fig:avg-resp-time-d-choices}
    \vspace{-4mm}
\end{figure}


\begin{figure}[t]
    \centering
    \includegraphics[width = 0.47\textwidth]{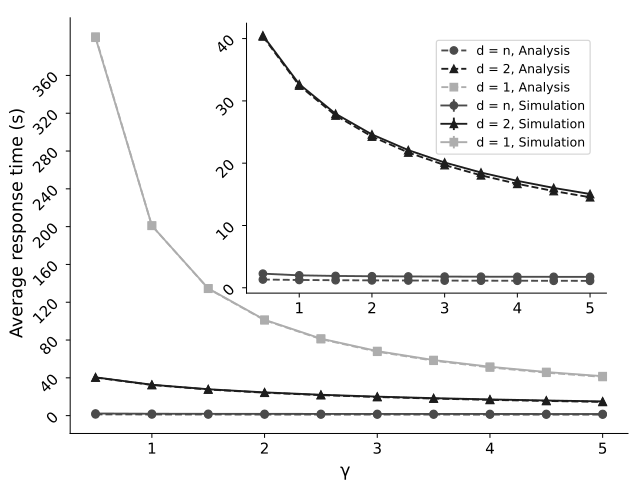}
    \vspace{-3mm}
    \caption{Average response time of an EGS implementing $d$-choices policy as a function of back-off rate $\gamma$.}
    \label{fig:slqd-average-response-time-vs-backoff-rate}
    \vspace{-4mm}
\end{figure}

\begin{figure}[t]
    \centering
    \includegraphics[width = 0.47\textwidth]{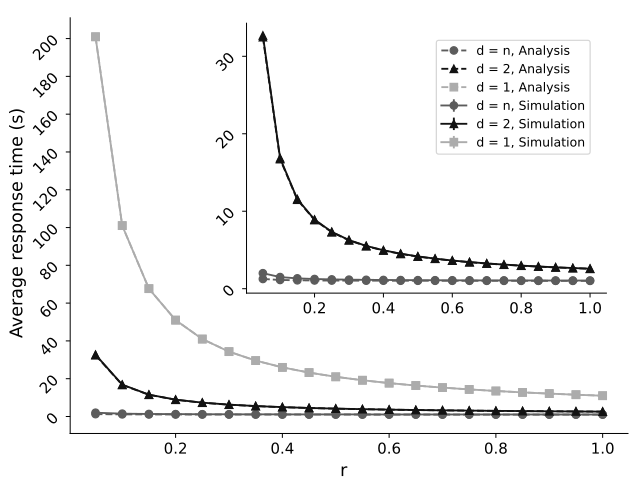}
    \vspace{-3mm}
    \caption{Average response time as a function of server-to-queue ratio $r = \frac{m}{n}$, with varying $m$ and fixed $n = 200$.}
    \label{fig:slqd-average-response-time-vs-server-queue-ratio}
\end{figure}

From Figure~\ref{fig:avg-resp-time-d-choices} we observe that for a fixed $\lambda/r\mu$, increasing the number of choices $d$ reduces the average response time. The effect is most prominent when increasing $d$ from one to two; a further increment in $d$ exhibits a diminishing gain in performance. The decrease in $\bar{w}$ for higher values of $d$ is expected since the EGS is more likely to select a non-empty queue with large queue sizes and perform a successful sampling. In Figure~\ref{fig:slqd-average-response-time-vs-backoff-rate}, we observe that the average response time decreases as the back-off rate $\gamma$ increases. A higher back-off rate allows an idle server to re-sample more frequently after a failed sampling, thus decreasing the server's overall idle time. In Figure~\ref{fig:slqd-average-response-time-vs-server-queue-ratio}, we compare the average response time of $\slq(d)$ as a function of server-to-queue ratio $r = \frac{m}{n}$ with a fixed number of flows (\ie $n = 200$). As servers process the jobs independently, increasing the number of servers $m$ increases the overall service capacity of the EGS. Hence, the overall average response time decreases.

\begin{figure}[t]
    \centering
    \includegraphics[width = 0.47\textwidth]{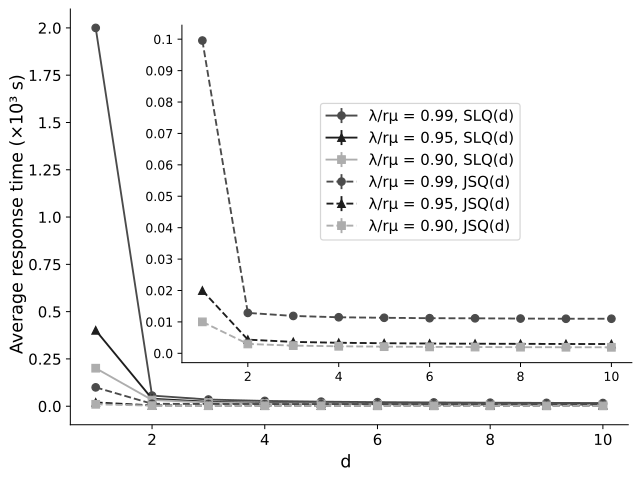}
    \vspace{-3mm}
    \caption{Average response time of $\jsq(d)$ and $\slq(d)$ as a function of the number of choices $d$.}
    \label{fig:slqd-vs-jsqd-average-response-time-powd}
    \vspace{-4mm}
\end{figure}

\begin{figure}[t]
    \centering
    \includegraphics[width = 0.47\textwidth]{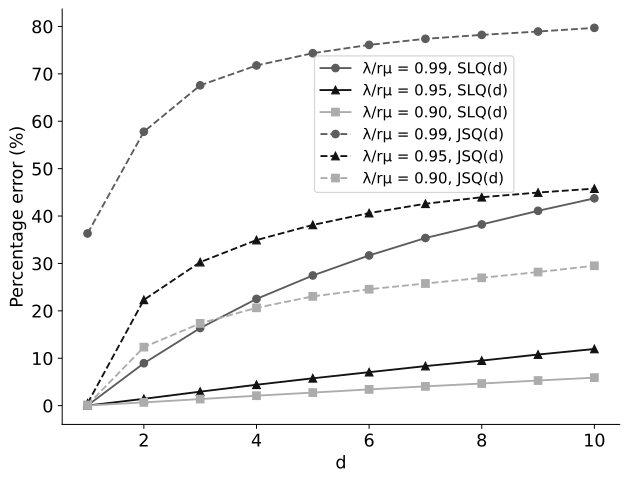}
    \vspace{-3mm}
    \caption{Percentage error of the average response time of $\jsq(d)$ and $\slq(d)$ as a function of the number of choices $d$.}
    \label{fig:slqd-vs-jsqd-percentage-error-powd}
\end{figure}

We now compare the performance of $\slq(d)$ with that of $\jsq(d)$. From Figure~\ref{fig:slqd-vs-jsqd-average-response-time-powd} we observe that $\jsq(d)$ has a lower average response time for a fixed $\frac{\lambda}{r \mu}$ ratio. The reason for this is that in the $\slq(d)$ model, BSAs may experience additional idle time due to failed (re-)sampling. In contrast, in the $\jsq(d)$ model new jobs immediately join the least loaded amongst $d$ randomly chosen queues, and there is never a need to re-sample. Figure~\ref{fig:slqd-vs-jsqd-percentage-error-powd} illustrates that $\slq(d)$ has a lower percentage error for the same $\frac{\lambda}{r \mu}$ ratio. Mean-field analysis necessitates $\slq(d)$ (resp. $\jsq(d)$) to have a large $n$ (resp. $m$); our system parameters do not meet the large $m$ requirement. Figure~\ref{fig:density-function} portrays queue state probability density functions, obtained via simulation as well as using the equilibrium state of the mean-field. While the mean-field approximation becomes less accurate for larger $\frac{\lambda}{r \mu}$ values (\eg at $\frac{\lambda}{r \mu} = 0.99$), overall we observe that our model exhibits close correspondence to the simulation.

\begin{figure}[t]
    \centering
    \includegraphics[width = 0.47\textwidth]{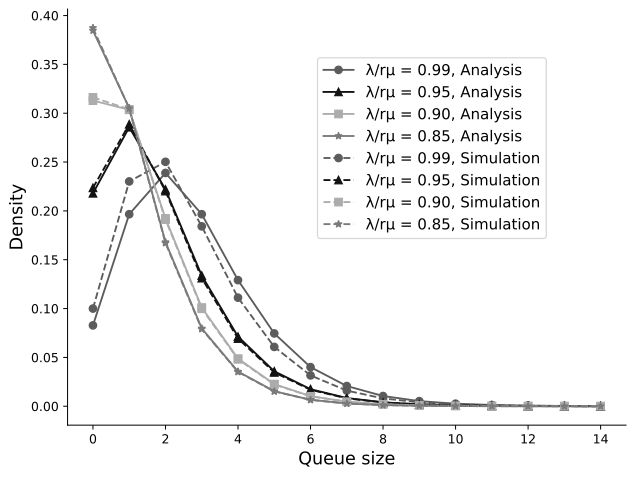}
    \vspace{-3mm}
    \caption{Comparison of analytical and simulated queue state density functions for an EGS deploying the $\slq(2)$ policy.}
    \label{fig:density-function}
    \vspace{-3mm}
\end{figure}

\section{Conclusion}\label{sec:conclusion}

Motivated by the importance of algorithm design and performance analysis of quantum switches with multiple entanglement swapping devices, we proposed the $d$-choices policy for an EGS and used mean-field techniques to study its performance. To this end, we developed a model for a homogeneous star network topology to study the role of this device in near-term quantum networks. For practical reasons, we placed queues at service-requesting nodes, instead of directly at the EGS. In this way, our model contrasts from traditional $d$-choices policies applied within classical data-center models, necessitating new analysis. We then derived a set of differential equations describing the system's evolution in the mean-field limit, and proved the existence and uniqueness of the equilibrium state. Our numerical results show that the mean-field limit is useful to study an EGS even when it has a small number of resources (\eg 10) provided there is a large number of request queues. Particularly, for $\slq(d)$ with $d=2$, mean-field approximations are tight except when $\frac{\lambda}{r\mu}$ is very close to one. Our analysis also applies to classical systems in which servers sample queues of flows to obtain a job for processing.

Our work serves as a preliminary study of an EGS deploying the $d$-choices policy. A valuable follow-up contribution would be to show global stability of the mean-field, and prove that its equilibrium state approximates the system's stationary distribution as the number of queues increases.
The homogeneity assumption can be relaxed by introducing different classes of flows with identical arrival and service rates. Mean-field analysis requires each class to be adequately populated, thus the EGS service model may require alterations to ensure stability.
Our study concentrated on a single, isolated EGS within a star topology. A valuable extension would be to explore multiple EGSes within more complex topologies.

\bibliographystyle{IEEEtran}
\bibliography{main}

\begin{thebibliography}{10}
\providecommand{\url}[1]{#1}
\csname url@samestyle\endcsname
\providecommand{\newblock}{\relax}
\providecommand{\bibinfo}[2]{#2}
\providecommand{\BIBentrySTDinterwordspacing}{\spaceskip=0pt\relax}
\providecommand{\BIBentryALTinterwordstretchfactor}{4}
\providecommand{\BIBentryALTinterwordspacing}{\spaceskip=\fontdimen2\font plus
\BIBentryALTinterwordstretchfactor\fontdimen3\font minus \fontdimen4\font\relax}
\providecommand{\BIBforeignlanguage}[2]{{%
\expandafter\ifx\csname l@#1\endcsname\relax
\typeout{** WARNING: IEEEtran.bst: No hyphenation pattern has been}%
\typeout{** loaded for the language `#1'. Using the pattern for}%
\typeout{** the default language instead.}%
\else
\language=\csname l@#1\endcsname
\fi
#2}}
\providecommand{\BIBdecl}{\relax}
\BIBdecl

\bibitem{Broadbent2009:universal-blind-quantum-computation}
A.~Broadbent, J.~Fitzsimons, and E.~Kashefi, ``Universal blind quantum computation,'' in \emph{2009 50th Annual IEEE Symposium on Foundations of Computer Science}, 2009.

\bibitem{Leichtle2021:verifying-bqp-computations-noisy-devices}
D.~Leichtle, L.~Music, E.~Kashefi, and H.~Ollivier, ``{Verifying BQP computations on noisy devices with minimal overhead},'' \emph{PRX Quantum}, 2021.

\bibitem{Jiang2007:distributed-quantum-computing-based-on-small-quantum-registers}
L.~Jiang, J.~M. Taylor, A.~S. S{\o}rensen, and M.~D. Lukin, ``Distributed quantum computation based on small quantum registers,'' \emph{Phys. Rev. A}, 2007.

\bibitem{Giovannetti2004:quantum-enhanced-measurements}
V.~Giovannetti, S.~Lloyd, and L.~Maccone, ``Quantum-enhanced measurements: Beating the standard quantum limit,'' \emph{Science}, 2004.

\bibitem{gottesman2012longer}
D.~Gottesman, T.~Jennewein, and S.~Croke, ``Longer-baseline telescopes using quantum repeaters,'' \emph{Phys. Rev. Lett.}, 2012.

\bibitem{BB84:quantum-cryptography-public-key-distribution-and-coin-tossing}
C.~H. Bennett and G.~Brassard, ``Quantum cryptography: Public key distribution and coin tossing,'' \emph{Theoretical Computer Science}, 2014.

\bibitem{bennett1992quantum}
C.~H. Bennett, G.~Brassard, and N.~D. Mermin, ``{Quantum cryptography without Bell’s theorem},'' \emph{Phys. Rev. Lett.}, 1992.

\bibitem{Hahn2020:Anonymous-CKA-in-QN}
F.~Hahn, J.~de~Jong, and A.~Pappa, ``Anonymous quantum conference key agreement,'' \emph{PRX Quantum}, 2020.

\bibitem{Bennett2005:remote-preparation-of-quantum-states}
C.~Bennett, P.~Hayden, D.~Leung, P.~Shor, and A.~Winter, ``Remote preparation of quantum states,'' \emph{IEEE Transactions on Information Theory}, 2005.

\bibitem{Benett1993:teleporting-unknown-quantum-state-via-dual-classical-and-EPR-channels}
C.~H. Bennett, G.~Brassard, C.~Cr\'epeau, R.~Jozsa, A.~Peres, and W.~K. Wootters, ``{Teleporting an unknown quantum state via dual classical and Einstein-Podolsky-Rosen channels},'' \emph{Phys. Rev. Lett.}, 1993.

\bibitem{gottesman1999demonstrating}
D.~Gottesman and I.~L. Chuang, ``Demonstrating the viability of universal quantum computation using teleportation and single-qubit operations,'' \emph{Nature}, 1999.

\bibitem{munro2015inside}
W.~J. Munro, K.~Azuma, K.~Tamaki, and K.~Nemoto, ``Inside quantum repeaters,'' \emph{IEEE Journal of Selected Topics in Quant. Electronics}, 2015.

\bibitem{azuma2023quantum}
K.~Azuma, S.~E. Economou, D.~Elkouss, P.~Hilaire, L.~Jiang, H.-K. Lo, and I.~Tzitrin, ``Quantum repeaters: From quantum networks to the quantum internet,'' \emph{Reviews of Modern Physics}, 2023.

\bibitem{Knill-Laflamme-1997:Theory-of-QEC}
E.~Knill and R.~Laflamme, ``Theory of quantum error-correcting codes,'' \emph{Phys. Rev. A}, 1997.

\bibitem{varnava2006loss}
M.~Varnava, D.~E. Browne, and T.~Rudolph, ``Loss tolerance in one-way quantum computation via counterfactual error correction,'' \emph{Phys. Rev. Lett.}, 2006.

\bibitem{briegel1998quantum}
H.-J. Briegel, W.~D{\"u}r, J.~I. Cirac, and P.~Zoller, ``Quantum repeaters: the role of imperfect local operations in quantum communication,'' \emph{Phys. Rev. Lett.}, 1998.

\bibitem{Dür1999:purification-based-quantum-repeaters}
W.~Dür, H.-J. Briegel, J.~I. Cirac, and P.~Zoller, ``Quantum repeaters based on entanglement purification,'' \emph{Phys. Rev. A}, 1999.

\bibitem{Muralidharan2018:one-way-quantum-repeaters-with-quantum-RS-codes}
S.~Muralidharan, C.-L. Zou, L.~Li, and L.~Jiang, ``{One-way quantum repeaters with quantum Reed-Solomon codes},'' \emph{Phys. Rev. A}, 2018.

\bibitem{Vardoyan2021:on-the-sthochastic-analysis-of-an-eds}
G.~Vardoyan, S.~Guha, P.~Nain, and D.~Towsley, ``On the stochastic analysis of a quantum entanglement distribution switch,'' \emph{IEEE Transactions on Quantum Engineering}, 2021.

\bibitem{Gauthier2023:control-architecture-for-EGS-in-QN}
S.~Gauthier, G.~Vardoyan, and S.~Wehner, ``A control architecture for entanglement generation switches in quantum networks,'' \emph{IEEE Transactions on Quantum Engineering}, 2023.

\bibitem{Gauthier2024:on-demand-resource-allocation-algo-for-qn-and-its-performance-analysis}
S.~Gauthier, T.~Vasantam, and G.~Vardoyan, ``An on-demand resource allocation algorithm for a quantum network hub and its performance analysis,'' in \emph{IEEE QCE}, 2024.

\bibitem{Mitzenmacher2001:power-of-two-choices-techniques-and-results}
M.~Mitzenmacher, A.~W. Richa, and R.~Sitaraman, ``The power of two random choices: A survey of techniques and results,'' \emph{Handbook of Randomized Computing}, 2001.

\bibitem{childress2006coherent}
L.~Childress, M.~Gurudev~Dutt, J.~Taylor, A.~Zibrov, F.~Jelezko, J.~Wrachtrup, P.~Hemmer, and M.~Lukin, ``Coherent dynamics of coupled electron and nuclear spin qubits in diamond,'' \emph{Science}, 2006.

\bibitem{Benam2008ACO}
M.~Bena{\"i}m and J.-Y.~L. Boudec, ``A class of mean field interaction models for computer and communication systems,'' \emph{2008 6th International Symposium on Modeling and Optimization in Mobile, Ad Hoc, and Wireless Networks and Workshops}, 2008.

\bibitem{Mitzenmacher2001:power-of-two-choices-random-load-balancing}
M.~Mitzenmacher, ``The power of two choices in randomized load balancing,'' \emph{IEEE Transac. on Parallel and Distributed Systems}, 2001.

\bibitem{einstein1935can}
A.~Einstein, B.~Podolsky, and N.~Rosen, ``Can quantum-mechanical description of physical reality be considered complete?'' \emph{Physical review}, 1935.

\bibitem{Panigrahy2023:capacity-region-QES-with-purification}
N.~K. Panigrahy, T.~Vasantam, D.~Towsley, and L.~Tassiulas, ``On the capacity region of a quantum switch with entanglement purification,'' in \emph{IEEE INFOCOM}, 2023.

\bibitem{Vardoyan2021:on-the-exact-analysis-of-an-idealized-quantum-switch}
G.~Vardoyan, S.~Guha, P.~Nain, and D.~Towsley, ``On the exact analysis of an idealized quantum switch,'' \emph{Performance Evaluation}, 2020.

\bibitem{Vardoyan2023:on-the-capacity-region-of-bipartite-and-tripartite-entanglement-switching}
G.~Vardoyan, P.~Nain, S.~Guha, and D.~Towsley, ``On the capacity region of bipartite and tripartite entanglement switching,'' \emph{ACM Transactions on Modeling and Performance Evaluation of Computing Systems}, 2023.

\bibitem{Dai2022:the-capacity-region-of-entanglement-switching}
W.~Dai, A.~Rinaldi, and D.~Towsley, ``{The Capacity Region of Entanglement Switching: Stability and Zero Latency},'' in \emph{IEEE QCE}, 2022.

\bibitem{nain2020analysis}
P.~Nain, G.~Vardoyan, S.~Guha, and D.~Towsley, ``On the analysis of a multipartite entanglement distribution switch,'' \emph{Proceedings of the ACM on Measurement and Analysis of Computing Systems}, 2020.

\bibitem{Avis2023:analysis-of-multipartite-entanglement-distribution-using-a-central-quantum-network-node}
G.~Avis, F.~Rozp{\k{e}}dek, and S.~Wehner, ``Analysis of multipartite entanglement distribution using a central quantum-network node,'' \emph{Phys. Rev. A}, 2023.

\bibitem{pfaff2013demonstration}
W.~Pfaff, T.~H. Taminiau, L.~Robledo, H.~Bernien, M.~Markham, D.~J. Twitchen, and R.~Hanson, ``Demonstration of entanglement-by-measurement of solid-state qubits,'' \emph{Nature Physics}, 2013.

\bibitem{Riebe2008:deterministic-entanglement-swapping-with-an-ion-trap-quantum-computer}
M.~Riebe, T.~Monz, K.~Kim, A.~S. Villar, P.~Schindler, M.~Chwalla, M.~Hennrich, and R.~Blatt, ``Deterministic entanglement swapping with an ion-trap quantum computer,'' \emph{Nature Physics}, 2008.

\bibitem{bhaskar2020experimental}
M.~K. Bhaskar, R.~Riedinger, B.~Machielse, D.~S. Levonian, C.~T. Nguyen, E.~N. Knall, H.~Park, D.~Englund, M.~Lon{\v{c}}ar, D.~D. Sukachev \emph{et~al.}, ``Experimental demonstration of memory-enhanced quantum communication,'' \emph{Nature}, 2020.

\bibitem{bennett1996purification}
C.~H. Bennett, G.~Brassard, S.~Popescu, B.~Schumacher, J.~A. Smolin, and W.~K. Wootters, ``Purification of noisy entanglement and faithful teleportation via noisy channels,'' \emph{Phys. Rev. Lett.}, 1996.

\bibitem{deutsch1996quantum}
D.~Deutsch, A.~Ekert, R.~Jozsa, C.~Macchiavello, S.~Popescu, and A.~Sanpera, ``Quantum privacy amplification and the security of quantum cryptography over noisy channels,'' \emph{Phys. Rev. Lett.}, 1996.

\bibitem{krastanov2019optimized}
S.~Krastanov, V.~V. Albert, and L.~Jiang, ``Optimized entanglement purification,'' \emph{Quantum}, 2019.

\bibitem{Murat_Alanyali}
M.~Alanyali and M.~Dashouk, ``Occupancy distributions of homogeneous queueing systems under opportunistic scheduling,'' \emph{IEEE Transactions on Information Theory}, 2011.

\bibitem{grice2011arbitrarily}
W.~P. Grice, ``{Arbitrarily complete Bell-state measurement using only linear optical elements},'' \emph{Phys. Rev. A}, 2011.

\bibitem{Turner_1998}
S.~R. Turner, ``{The Effect of Increasing Routing Choice on Resource Pooling},'' \emph{Probability in the Engineering and Info. Sciences}, 1998.

\bibitem{vanMieghem2014:performance-analysis-of-complex-networks-and-systems}
P.~van Mieghem, \emph{Performance Analysis of Complex Networks and Systems}.\hskip 1em plus 0.5em minus 0.4em\relax Cambridge University Press, 2014.

\end{thebibliography}

\end{document}